\newtheorem{theorem}{Theorem}
\newtheorem{lemma}{Lemma}
\newtheorem{remark}{Remark}
\newtheorem{construction}{Construction}
\newtheorem{example}{Example}
\begin{document}

\title{New Optimal Binary Sequences with Period $4p$ via Interleaving Ding-Helleseth-Lam Sequences}

\author{Wei Su, Yang~Yang, and Cuiling Fan
\thanks{W. Su is with School of Economics and Information Engineering, Southwestern University of Finance and Economics, Chengdu, China. Y. Yang and C.L. Fan are with the School of Mathematics, Southwest Jiaotong University, Chengdu, China. Email: suwei@swufe.edu.cn, yang$\_$data@swjtu.edu.cn, fcl@swjtu.edu.cn. }
\thanks{Manuscript received May 28, 2017.}}

\maketitle

\begin{abstract}
Binary sequences with optimal autocorrelation  play
important roles in radar, communication, and cryptography.
Finding new binary sequences with optimal autocorrelation has been an interesting research topic in sequence design.
Ding-Helleseth-Lam sequences are such a class of binary sequences of period $p$, where $p$ is an odd prime with $p\equiv 1(\bmod~4)$. The objective of this letter is to present a construction of binary sequences
of period $4p$ via interleaving four suitable Ding-Helleseth-Lam sequences. This construction generates
new binary sequences with optimal autocorrelation which can not be produced by earlier ones.
\end{abstract}

\begin{IEEEkeywords}
Binary sequences, optimal autocorrelation, interleaving, Ding-Helleseth-Lam sequences.
\end{IEEEkeywords}

\section{Introduction}

Due to simplicity of implementation, binary  sequences with optimal autocorrelation
have important applications in many areas of cryptography, communication and radar.
In cryptography, the sequences can be used to generate key streams in stream cipher encryptions.
In communication and radar, on the other hand, the sequences are employed to acquire the accurate timing information of received signals. During these four decades, searching binary sequences with optimal autocorrelation
has been an interesting research topic in sequence design.
The reader is referred to \cite{GG2005} for more details on binary sequences with optimal autocorrelation
and their applications. See also \cite{CD09}, \cite{TG10} and \cite{TD10} for recent progress on
their constructions.

Given two binary sequences $a=(a(t))$ and $b=(b(t))$ of period $N$, their (periodic) cross-correlation
is defined by
\begin{eqnarray*}
R_{a,b}(\tau)= \sum\limits_{i=0}^{N-1}(-1)^{a(i)+b((i+\tau)_N)}
\end{eqnarray*}
where $a(t),b(t)\in \{0,1\}$ and the addition $(i+\tau)_N$ is the smallest non-negative integer such that $(i+\tau)_N\equiv (i+\tau)(\bmod~N)$.
When the two sequences $a$ and $b$ are identical, the periodic cross-correlation function is
said to be the periodic autocorrelation function, and is denoted by  $R_a$ for short. Furthermore,
these $R_a(\tau), 1\leq \tau\leq N-1$, are referred to as the out-of-phase autocorrelation
values of the sequence $(a(t))$.

Let $a=(a(t))$ be a binary sequence of period $N$ and  $\mathbb{Z}_N=\{0,1,\cdots,N-1\}$ denote the ring of integers modulo $N$. The set
$$
C_a=\{t\in \mathbb{Z}_N: a(t)=1\}
$$
is called the support of $a$, and $a$ is said to be the characteristic sequence of
the set $C_a\subset \mathbb{Z}_N$. It is easy to verify that
\begin{eqnarray}\label{eqn-relation}
R_{a}(\tau)= N-4|(C_a+\tau)\cap C_a|, \tau\in \mathbb{Z}_N.
\end{eqnarray}

It follows from (\ref{eqn-relation}) that $R_{a}(\tau)\equiv N~(\bmod~4)$ for each $1\leq \tau<N$. Accordingly,
in terms of the smallest possible values of the autocorrelation, the optimal values of out-of-phase autocorrelations of binary sequences can be
classified into four types as follows:
\begin{enumerate}
\item [(A)] $R_a(\tau)=0$ for $N\equiv 0 \pmod{4}$;
\item [(B)] $R_a(\tau)\in \{1,-3\}$ for $N\equiv 1 \pmod{4}$;
\item [(C)] $R_a(\tau)\in \{\pm 2\}$ for $N\equiv 2 \pmod{4}$;
\item [(D)] $R_a(\tau)=-1$ for $N\equiv 3 \pmod{4}$.
\end{enumerate}

The sequences in Types (A) and (D) are called perfect sequences and ideal sequences, respectively.
The only known perfect binary sequences up to equivalence is the $(0, 0, 0, 1)$.
It is conjectured that there is no perfect binary sequence of period $N>4$. This conjecture
is widely believed to be true in both mathematical and engineer society. Hence, it is natural to consider the next smallest values for the out-of-phase autocorrelation of a binary sequence of period $N\equiv 0 \pmod{4}$. That is,
$R_a(\tau)\in \{0,\pm 4\}$. If both $4$ and $-4$  occur when $\tau$ rangers from $1$ to $N-1$, then the sequence $a$ is said to be optimal with respect to its correlation magnitude \cite{Yu2008}.

Known constructions of optimal binary sequences of period $N\equiv 0(\bmod~4)$ are summarized as follows.
\begin{enumerate}
\item [1)] $N=q-1$. There were two classes of constructions: The well-known Sidelnikov sequences \cite{Sidelnikov,Lempel} and their slight generalization using $(z+1)^d+az^d+b$  \cite{No}.
\item [2)] $N=4S$, $S$ even.  Recently, Krengel and Ivanov \cite{Krengel2016} proposed two constructions of  optimal binary sequences of period $4S$. Their constructions are based on almost perfect binary sequences of length $2S$ given by Wolfmann \cite{Wolfmann}, and optimal binary sequences of length $S\equiv 2(\bmod~4)$ (i.e., the Sidelnikov sequences \cite{Sidelnikov,Lempel} or the Ding-Helleseth-Martinsen sequences \cite{DHM}).
\item [3)] $N=4S$, $S$ odd. Arasu, Ding, Helleseth, Kumar, and Martisen \cite{Arasu01} proposed optimal binary sequences of length $4S$ from an almost difference set. This was respectively generated by Zhang, Lei, and Zhang \cite{Zhang2006} for the case $S\equiv 3(\bmod~4)$ being an odd prime, and by Yu and Gong \cite{Yu2008} based on a perfect sequence of period $4$ and an ideal sequence of period $S$, where $S=2^n-1$, $S=p$ where $p\equiv 3(\bmod~4)$, or $S=p(p+2)$, where $p$ and $p+2$ are twin primes. In \cite{Yu2008}, Yu and Gong also constructed binary sequences of period $4(2^{2k}-1)$ with out-of-phase auto-correlation in $\{0,\pm 4\}$. In 2010, Tang and Gong \cite{TG10} gave three new constructions for optimal binary sequences of period $4S$ by using interleaving method, whose  columns sequences are the three types of pairs of sequences: i) generalized GMW sequence pair of period $S=2^{2k}-1$, where $k$ is a positive integer; ii) twin-prime sequence pair of period $S=p(p+2)$, where $p$ and $p+2$ are twin primes; iii) Legendre sequence pair of period $S=p$, where $p$ is an odd prime. Those sequences have optimal auto-correlation $R_a(\tau)\in \{0,\pm 4\}$ for all $1\le\tau<4S$. Recently, choosing arbitrary two ideal binary sequences of the same length, Tang and Ding \cite{TD10} constructed new classes of optimal binary sequences via interleaving method firstly introduced by Gong \cite{Gong1995}, which is a useful method to construct sequences with low out-of-phase auto-correlation and cross-correlation (This will be introduced in the next section).
\end{enumerate}

%The parameters of known binary sequences of Type A are summarized in the following table.
%\begin{table}[ht] \caption{Optimal binary sequences with optimal auto-correlation} \label{tab 1}
%\begin{center}
%\begin{tabular}{|c|c|c|}
%\hline  Period $N$  & Restriction  & References  \\
%\hline
% $q-1$ & $q\equiv 1(\bmod~4)$, a prime power  &   \cite{Lempel,Sidelnikov}  \\ \hline
%$4(2^m-1)$ &  $m\ge 2$, a positive integer & \cite{Arasu01,TD10,TG10,Yu2008}\\
% \hline
% $4p(p+2)$ & $p$, $p+2$ being two primes  & \cite{Arasu01,TD10,TG10} \\ \hline
%  $4p$ &  $p$ being a prime  & \cite{Arasu01,TD10,TG10} \\ \hline
%    $4p$ &  $p=4f+3$ being a prime  & \cite{Zhang2006} \\
%\hline
%    $4N$ &  $N=2p$, $p$ being a prime  & \cite{Krengel2016} \\
%\hline
%    $4p$ &  $p=x^2+4=4f+1$ being a prime, $f$ odd  & This paper \\
%\hline
%\end{tabular}
%\end{center}
%\end{table}

Ding-Helleseth-Lam sequences are such a class of binary sequences of period $p$, where $p$ is an odd prime with $p\equiv 1(\bmod~4)$.
The objective of this letter is to present a construction of binary sequences
of period $4p$ via interleaving four suitable Ding-Helleseth-Lam sequences.
It will be seen later that our construction generates
new binary sequences with optimal autocorrelation which can not be produced by earlier ones.

The rest of this paper is organized as follows. In Section II, we  recall the interleaving method, Ding-Helleseth-Lam sequences \cite{DHL} and their correlation properties \cite{Su}. In Section III, we  present eight classes of new interleaved sequences by choosing suitable four Ding-Helleseth-Lam sequences as column sequences. Those new sequences have optimal auto-correlation magnitude. Finally, we conclude this letter.

\section{Preliminaries}

In this section, we  give an introduction to interleaved technique and Ding-Helleseth-Lam
sequences which will be used to construct new optimal binary sequences in the sequel.

\subsection{Interleaved Technique}
Interleaved method proposed by Gong \cite{Gong1995} is a powerful technique in sequence design.
The key idea of this method is to obtain long sequences with good correlation  from
shorter ones. Following the notation and terminology in \cite{Gong1995}, we give a shot introduction
to this method.  Let  $a_k=(a_k(0),a_k(1),\cdots,a_k(N-1))$ be a sequence of period $N$, where $0\leq k\leq M-1$. From these $M$ sequences, we can obtain an $N\times M$ matrix $U=(U_{i,j})$:
\begin{eqnarray*}
U=\left(
      \begin{array}{cccc}
        a_0(0) & a_1(0) & \cdots & a_{M-1}(0) \\
        a_0(1) & a_1(1) & \cdots & a_{M-1}(1) \\
        \vdots & \vdots & \ddots & \vdots \\
        a_0(N-1) & a_1(N-1) & \cdots & a_{M-1}(N-1) \\
      \end{array}
    \right).
\end{eqnarray*}
Concatenating the successive rows of the matrix above, an interleaved sequence $u=(u(t))$ of period $MN$ is
defined by
$$u_{iM+j}=U_{i,j},\,\,\,\,0\le i<N,0\le j<M.$$
For convenience, we denote $u$ by
$$
u=I(a_0,a_1,\cdots,a_{M-1}),$$
where $I$ is called the interleaving operator. Herein and hereafter $a_0,a_1,\cdots,a_{M-1}$ are called the {column sequences} of $u$.

Let $L$ be  the (left  cyclical) shift operator of any vector, i.e.,  $L(c)=(c(1),c(2),\cdots,c(N-1), c(0))$ for any $c=(c(0),c(1),\cdots,c(N-1))$. Then $L^{\tau}(u)$ can be represented as
$$L^{\tau}(u)=I(L^{\tau_1}(a_{\tau_2}),
\cdots,L^{\tau_1}(a_{M-1}),L^{\tau_1+1}(a_0),\cdots,L^{\tau_1+1}(a_{\tau_2-1})).$$
where $\tau=\tau_1M+\tau_2$ $(0\le \tau_1<N, 0\le \tau_2<M)$. It is easy to verify that the
periodic autocorrelation of $u$ at shift $\tau$ is given by
\begin{eqnarray*}\label{eq interleaving}
R_u(\tau)
&=&\sum\limits_{k=0}^{M-\tau_2-1}R_{a_k,a_{k+\tau_2}}(\tau_1)
+\sum\limits_{k=M-\tau_2}^{M-1}R_{a_k,a_{k+\tau_2-M}}(\tau_1+1).
\end{eqnarray*}
This means that the autocorrelation of $u$ is fully determined by the autocorrelation and crosscorrelation of column sequences $a_i$.

\subsection{Ding-Helleseth-Lam sequences}

Let  $p=4f+1$ is an odd prime, where $f$ is a positive integer. Let $\alpha$ be a generator of the multiplicative group of the residue ring $\mathbb{Z}_p$, and let $D_i=\{\alpha^{i+4j}: 0\le j<f \}$, $0\le i<4$. Those $D_i, 0\le i<4$, are called the cyclotomic classes of order $4$ with respect to $\mathbb{Z}_p$.

In \cite{DHL}, Ding, Helleseth, and Lam constructed optimal binary sequences of odd prime period $p$ by using cyclotomic number of order $4$.

\begin{lemma}[Ding-Helleseth-Lam sequences, \cite{DHL}]\label{le 1}
Let $p=4f+1=x^2+4y^2$ be an odd prime, where $f,x,y$ are integers. Let $D_0,D_1,D_2,D_3$ be  the cyclotomic classes of order $4$ with respect to $\mathbb{Z}_p$. Assume that $s_1, s_2, s_3, s_4$ are binary sequences of period $p$ with supports $D_0\cup D_1$, $D_0\cup D_3$, $D_1\cup D_2$ and $D_2\cup D_3$, respectively. Then each $s_i$ is optimal, i.e., $R_{s_i}(\tau)\in \{1,-3\}$ for all $1\le\tau<p$, if and only if $f$ is odd and $y=\pm 1$.
\end{lemma}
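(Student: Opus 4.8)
The plan is to reduce the autocorrelation of each $s_i$ to cyclotomic numbers of order $4$, and then use the known evaluation of those cyclotomic numbers in terms of $p=x^2+4y^2$ to pin down exactly when all out-of-phase values lie in $\{1,-3\}$. By the relation $R_{s_i}(\tau)=p-4|(C_{s_i}+\tau)\cap C_{s_i}|$ from (\ref{eqn-relation}), proving optimality of $s_i$ is equivalent to showing $|(C_{s_i}+\tau)\cap C_{s_i}|\in\{(p-1)/4,(p+3)/4\}$ for every nonzero $\tau$; that is, the support $C_{s_i}$ is an almost difference set of the appropriate parameters. Since each support is a union of two cyclotomic classes of order $4$, the difference function $|(C_{s_i}+\tau)\cap C_{s_i}|$ for $\tau\in D_\ell$ is a sum of four cyclotomic numbers $(m,n)_4$ (of order $4$), so the whole problem is a finite computation once we know those numbers.

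First I would fix notation for the cyclotomic numbers $(i,j)_4=|(D_i+1)\cap D_j|$ and record the standard facts: their dependence on the parities of $f$ and on the residue of $-1$ (here $-1\in D_0$ since $p\equiv1\pmod 4$), the symmetry relations $(i,j)_4=(j,i)_4$ when $f$ is even and $(i,j)_4=(i+2,j+2)_4$ in general, and the closed-form expressions of the $(i,j)_4$ in terms of $p,x,y$ (with the sign of $y$ normalized, as usual, by $x\equiv1\pmod 4$). Next, for a fixed support such as $C_{s_1}=D_0\cup D_1$, I would write
$$|(C_{s_1}+\tau)\cap C_{s_1}| = \sum_{i\in\{0,1\}}\sum_{j\in\{0,1\}} |(D_i+\tau)\cap D_j|,$$
and observe that for $\tau\in D_\ell$ this equals $\sum_{i,j\in\{0,1\}}(\ell-i,\ell-j)_4$ (indices mod $4$), using that multiplying by $\tau^{-1}\in D_{-\ell}$ sends $D_i$ to $D_{i-\ell}$. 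This produces, for each of the four choices of $\ell$, an explicit integer expressed through $p$, $x$, $y$; by the symmetry of the four supports under the cyclotomy (they are obtained from one another by the substitutions $\alpha\mapsto\alpha^{k}$), it suffices to carry this out in detail for one $s_i$ and then invoke the symmetry.

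The decisive step is the arithmetic: after substituting the closed forms, the four values $|(C_{s_i}+\tau)\cap C_{s_i}|$ as $\tau$ ranges over the four cyclotomic classes collapse to expressions of the form $(p-1)/4$ or $(p+3)/4$ plus a correction term that is a small integer multiple of $y$ or involves the parity term $\tfrac{(-1)^f}{\ ?}$; requiring every correction term to vanish or to be $\pm1$ of the right sign forces simultaneously $f$ odd and $y=\pm1$, and conversely those two conditions make all corrections collapse so that only $\{1,-3\}$ occur. The main obstacle I anticipate is bookkeeping: correctly tracking the $16$ cyclotomic-number contributions across the four shift classes for each of the four sequences without sign or index errors, and being careful that the normalization convention for the sign of $y$ is applied consistently (this is the classic pitfall in order-$4$ cyclotomy arguments). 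Once the table of values is assembled correctly, the equivalence $R_{s_i}(\tau)\in\{1,-3\}\ \forall\,\tau \iff f\text{ odd and }y=\pm1$ reads off immediately in both directions, which also shows the four sequences are optimal under exactly the same hypothesis.
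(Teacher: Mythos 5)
The paper does not prove Lemma~\ref{le 1} at all: it is quoted as a known result from \cite{DHL}, so there is no in-paper argument to compare yours against. Your plan --- reduce $R_{s_i}(\tau)$ via (\ref{eqn-relation}) to the difference function of a union of two cyclotomic classes, expand that for $\tau\in D_\ell$ into a sum of four cyclotomic numbers of order $4$, and substitute the closed forms of those numbers in terms of $p=x^2+4y^2$ --- is precisely the route taken in the original Ding--Helleseth--Lam paper, and it does establish both directions of the equivalence.

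Two concrete issues remain before this counts as a proof. First, the parenthetical ``$-1\in D_0$ since $p\equiv 1\pmod 4$'' is wrong as stated: $-1=\alpha^{2f}$ lies in $D_0$ exactly when $f$ is even and in $D_2$ when $f$ is odd. Since the parity of $f$ is the very dichotomy the lemma turns on, and since the symmetry relations among the $(i,j)_4$ (whether $(i,j)_4=(j,i)_4$ or $(i,j)_4=(j+2,i+2)_4$) hinge on where $-1$ sits, this slip would propagate into the casework if carried forward. Relatedly, $(i,j)_4=(i+2,j+2)_4$ is not a valid identity ``in general.'' Second, the index reduction should read $|(D_i+\tau)\cap D_j|=(i-\ell,\,j-\ell)_4$ for $\tau\in D_\ell$ (dividing by $\tau$ sends $D_i$ to $D_{i-\ell}$ and preserves the ``$+1$''), not $(\ell-i,\ell-j)_4$. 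Finally, the decisive step is only described, not performed: until the sixteen cyclotomic-number contributions are actually substituted and the four resulting values per sequence are verified to lie in $\{(p-1)/4,(p+3)/4\}$ exactly when $f$ is odd and $y=\pm 1$ (and to fail otherwise), the equivalence is asserted rather than proved. As a blueprint the proposal is sound and matches the cited source's method; as a proof it still owes the arithmetic.
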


The correlation values of Ding-Helleseth-Lam sequences $s_1, s_2, s_3, s_4$ have been determined in \cite{Su}, which are useful for the main result of this paper. Here we list it as follows.

\begin{lemma}\label{le odd}
Let $s_1, s_2, s_3, s_4$ be the Ding-Helleseth-Lam sequences in Lemma \ref{le 1}. For odd $f$, the autocorrelation and cross-correlation of $s_1,s_2,s_3,s_4$  are given in Table \ref{tab odd f}.

\begin{table}\caption{The autocorrelation and cross-correlation of  Ding-Helleseth-Lam sequences}\label{tab odd f}
\begin{center}
\begin{tabular}{|c|c|c|c|c|c|}
  \hline
$\tau$ & $\{0\}$ & $D_0$ & $D_1$ & $D_2$ & $D_3$ \\ \hline
$R_{s_1}(\tau)$ & $p$ & $-2y-1$ & $2y-1$ & $-2y-1$ & $2y-1$\\ \hline
$R_{s_2}(\tau)$ & $p$ & $2y-1$ & $-2y-1$ & $2y-1$ & $-2y-1$\\ \hline
$R_{s_3}(\tau)$ & $p$ & $2y-1$ & $-2y-1$ & $2y-1$ & $-2y-1$\\ \hline
$R_{s_4}(\tau)$ & $p$ & $-2y-1$ & $2y-1$ & $-2y-1$ & $2y-1$\\  \hline
$R_{s_1,s_2}(\tau)$ & $1$ & $x$ & $-x+2$ & $x$ & $-x-2$\\  \hline
$R_{s_1,s_3}(\tau)$ & $1$ & $-x+2$ & $x$ & $-x-2$ & $x$\\  \hline
$R_{s_1,s_4}(\tau)$ & $2-p$ & $2y+3$  & $3-2y$ & $2y-1$ & $-1-2y$ \\ \hline
$R_{s_2,s_1}(\tau)$ & $1$ & $x$ & $-x-2$ & $x$ & $-x+2$ \\  \hline
$R_{s_2,s_3}(\tau)$ & $2-p$ & $3-2y$  & $2y-1$ & $-1-2y$ & $3+2y$ \\ \hline
$R_{s_2,s_4}(\tau)$ & $1$ & $-x+2$  & $x$ & $-x-2$ & $x$ \\ \hline
$R_{s_3,s_1}(\tau)$ & $1$  & $-x-2$ & $x$ & $-x+2$ & $x$\\  \hline
$R_{s_3,s_2}(\tau)$ & $2-p$  & $-1-2y$ & $3+2y$ & $3-2y$  & $2y-1$\\ \hline
$R_{s_3,s_4}(\tau)$ & $1$ & $x$ & $-x+2$  & $x$ & $-x-2$\\ \hline
$R_{s_4,s_1}(\tau)$ & $2-p$ & $2y-1$ & $-1-2y$  & $2y+3$  & $3-2y$\\ \hline
$R_{s_4,s_2}(\tau)$ & $1$ & $-x-2$ & $x$ & $-x+2$  & $x$\\ \hline
$R_{s_4,s_3}(\tau)$ & $1$ & $x$ & $-x-2$& $x$ & $-x+2$ \\ \hline
\end{tabular}
\end{center}
\end{table}
\end{lemma}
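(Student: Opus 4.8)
The plan is to express every entry of Table~\ref{tab odd f} as a short integer combination of cyclotomic numbers of order $4$, and then to read off the values from their classical evaluations. To begin, an elementary count of the values of $(a(i),b(i+\tau))$, $0\le i<N$ (the two-sequence counterpart of (\ref{eqn-relation})), gives, for any binary sequences $a,b$ of period $N$,
\begin{equation*}
R_{a,b}(\tau)=N-2|C_a|-2|C_b|+4\bigl|C_a\cap(C_b-\tau)\bigr|.
\end{equation*}
Each $s_i$ has support a union of two cyclotomic classes of order $4$, so $|C_{s_i}|=2f=(p-1)/2$, and for the sixteen functions in the table the identity specialises to $R_{s_i,s_j}(\tau)=2-p+4\,|C_{s_i}\cap(C_{s_j}-\tau)|$. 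At $\tau=0$ the intersection is simply $C_{s_i}\cap C_{s_j}$, of size $2f$, $f$, or $0$, which gives the ``$\{0\}$''-column immediately; the substance of the proof is to evaluate $|C_{s_i}\cap(C_{s_j}-\tau)|$ for $\tau\neq 0$.

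Recall that the cyclotomic number $(k,\ell)$ of order $4$ is $|\{u\in D_k:u+1\in D_\ell\}|$. Substituting $z=\tau u$ one checks that, for $\tau\in D_m$ and indices read modulo $4$,
\begin{equation*}
\bigl|D_k\cap(D_\ell-\tau)\bigr|=(k-m,\ell-m),
\end{equation*}
the contribution $u=-1$ being automatically excluded because $0\notin D_\ell$. Since the translates $D_\ell-\tau$ are pairwise disjoint and omit $0$ (as $0\notin C_{s_i}$), expanding $C_{s_i}\cap(C_{s_j}-\tau)$ into its four pieces $D_k\cap(D_\ell-\tau)$ rewrites each entry of the table as $2-p$ plus four times a sum of four cyclotomic numbers of order $4$, one such expression for each class $D_0,D_1,D_2,D_3$ in which $\tau$ can lie.

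Because $f$ is odd we have $p\equiv 5\pmod 8$ and $-1\in D_2$; in this regime Gauss's classical determination of the cyclotomic numbers of order $4$ (already invoked in \cite{DHL}) gives all sixteen of them in terms of the representation $p=x^2+4y^2$ with $x\equiv 1\pmod 4$, the sign of $y$ being fixed by the choice of the generator $\alpha$, and only five distinct values occur. Substituting these, and using the symmetries $(k,\ell)=(4-k,\ell-k)$ and (for odd $f$) $(k,\ell)=(\ell+2,k+2)$ together with the row identities $\sum_{\ell=0}^{3}(k,\ell)=f-[\,k\equiv 2\pmod 4\,]$, collapses each four-term sum to the corresponding entry of Table~\ref{tab odd f}. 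A large part of the case analysis can be bypassed at the outset: $C_{s_4}=\mathbb{Z}_p^*\setminus C_{s_1}$ and $C_{s_3}=\mathbb{Z}_p^*\setminus C_{s_2}$ give $R_{s_4}=R_{s_1}$, $R_{s_3}=R_{s_2}$ off the origin and link the associated cross-correlations; multiplying $\mathbb{Z}_p^*$ by a fixed element of $D_1$ cyclically permutes $(D_0,D_1,D_2,D_3)$ and hence the rows of the table, carrying $R_{a,b}(\tau)$ with $\tau\in D_m$ to $R_{a',b'}(\tau')$ with $\tau'\in D_{m-1}$; and $R_{a,b}(\tau)=R_{b,a}(-\tau)$. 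Together these relations leave only a handful of genuinely independent computations, for instance $R_{s_1}$, $R_{s_1,s_2}$, $R_{s_1,s_3}$ and $R_{s_1,s_4}$ restricted to $\tau\in D_0$.

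The one real obstacle is keeping the normalisations mutually consistent: the index arithmetic modulo $4$, the split on which class contains $\tau$, and---most delicately---the signs attached to $x$ and to $y$ must all be chosen so that a single generator $\alpha$ (hence one value of $x$ with $x\equiv 1\pmod 4$ and one sign of $y$) reproduces every entry of the table at once, in particular the ``mixed'' rows $R_{s_1,s_4}$ and $R_{s_2,s_3}$ that involve both the $x$-type and the $y$-type cyclotomic numbers. Verifying the claimed identities directly on a small prime such as $p=5$ or $p=13$ is a cheap safeguard worth performing before carrying out the general computation.
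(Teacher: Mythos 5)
The paper does not actually prove Lemma~\ref{le odd}: the table is imported wholesale from \cite{Su} (``The correlation values \dots have been determined in \cite{Su}\dots Here we list it as follows''), so there is no internal proof to measure you against. Your outline is the standard — and essentially the only — route to such a table: write $R_{s_i,s_j}(\tau)=2-p+4\,|C_{s_i}\cap(C_{s_j}-\tau)|$, decompose each intersection into four terms $|D_k\cap(D_\ell-\tau)|=(k-m,\ell-m)$ for $\tau\in D_m$, and substitute Gauss's evaluation of the order-$4$ cyclotomic numbers for $f$ odd (five distinct values in terms of $p=x^2+4y^2$, $x\equiv1\pmod4$). I spot-checked several entries (e.g.\ $R_{s_1}(\tau)=-2y-1$ on $D_0$ via $(0,0)+(0,1)+(1,0)+(1,1)=(p-3-2y)/4$, and $R_{s_1,s_4}(\tau)=2y+3$ on $D_0$ via $(0,2)+(0,3)+(1,2)+(1,3)=(p+1+2y)/4$) and they reproduce the table, so the method is sound and, most likely, is exactly what \cite{Su} does. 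Two caveats keep this from being a finished proof rather than a plan: (i) the sixteen four-term sums are asserted to ``collapse'' to the table entries but are not actually evaluated, and since you yourself flag the $x,y$ sign conventions as the delicate point, the work that remains is precisely the part where errors occur; (ii) the shortcut $R_{s_4}=R_{s_1}$ via complementation is not immediate, because $C_{s_4}=\mathbb{Z}_p^*\setminus C_{s_1}$ is the complement only after deleting $0$, and changing one coordinate can shift an autocorrelation value by $\pm4$ — one must either check that the two boundary terms cancel (they do, because $-1\in D_2$ swaps $C_{s_1}$ with its complement) or, more cleanly, use $C_{s_4}=\alpha^2 C_{s_1}$ together with the period-$2$ pattern of the $R_{s_1}$ row.
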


\section{New Optimal Binary Sequences with Period $4p$ via Interleaving Ding-Helleseth-Lam Sequences}
In this section, we construct new optimal binary sequences via interleaved technique and Ding-Helleseth-Lam Sequences. From now on, we always suppose that $p=4f+1=x^2+4y^2$ is an odd prime, where $x$ is an integer, $y=\pm 1$, and $f$ is an odd integer. Let  $s_1, s_2, s_3$ and $s_4$ be the  Ding-Helleseth-Lam Sequences in Lemma \ref{le 1}.

We first propose a generic simple construction of binary sequences with period $4p$ based on interleaved technique and Ding-Helleseth-Lam Sequences.

\begin{construction}\label{construction-main}
Let $a_0,a_1,a_2,a_3$ be four binary sequences of length $p$ and $b=(b(0),b(1),b(2),b(3))$ be a binary sequence of length $4$. Construct a binary sequence $u=(u(t))$ of length $4p$  as follows:
\begin{eqnarray}\label{eq u}
u=I(a_0+b(0),L^{d}(a_1)+b(1),L^{2d}(a_2)+b(2),L^{3d}(a_3)+b(3)),
\end{eqnarray}
where  $d$ is some integer with $4d\equiv 1~(\bmod~p)$.
\end{construction}

\begin{remark}For Construction \ref{construction-main}, we have the following comments.
\begin{itemize}
\item[1.] When $b=(b(0),b(1),b(2),b(3))\in \{(0,0,0,1), (1,1,1,0)\}$, and
$a_i, 0\le i\le 3$ are chosen form the first type and the second type Legendre sequences of period $p$, the sequence $u$ generated by  Construction \ref{construction-main} is exactly the binary sequence with optimal correlation reported in \cite{TG10}.

\item[2.] The following results show that the resultant sequence $u$ by Construction \ref{construction-main} also has optimal autocorrelation if the column sequences $a_0,a_1,a_2$ and $a_3$ are properly chosen
from the Ding-Helleseth-Lam sequences, and the binary sequence $b=(b(0),b(1),b(2),b(3))$ satisfies $b(0)=b(2)$ and $b(1)=b(3)$. Therefore, our construction can generate new optimal binary sequences which cannot produced by known ones.
\end{itemize}

\end{remark}

\begin{theorem}\label{th 4p}
Let $b=(b(0),b(1),b(2),b(3))$ be a binary sequence with $b(0)=b(2)$ and $b(1)=b(3)$, and $(a_0,a_1,a_2,a_3)=(s_3,s_2,s_1,s_1)$.
Then the binary sequence $u$ by Construction \ref{construction-main} is optimal.
\end{theorem}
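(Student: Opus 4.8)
The plan is to use the interleaved-sequence autocorrelation formula from Section~II, which expresses $R_u(\tau)$ as a sum of four correlation values of the column sequences at shifts $\tau_1$ or $\tau_1+1$, where $\tau = 4\tau_1 + \tau_2$ with $0\le\tau_1<p$ and $0\le\tau_2<4$. First I would observe that adding the constant $b(j)$ to column $j$ multiplies the corresponding term in the sum by $(-1)^{b(j)+b(j+\tau_2)}$; since $b(0)=b(2)$ and $b(1)=b(3)$, the vector $b$ has period $2$, so $(-1)^{b(j)+b(j+\tau_2)}$ depends only on the parity of $\tau_2$. Hence for even $\tau_2$ all four signs are $+1$ and the $b$-shift has no effect, while for odd $\tau_2$ all four signs are $-1$, i.e. $R_u(\tau)$ is simply negated relative to the $b=0$ case. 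So it suffices to analyze the four cases $\tau_2=0,1,2,3$ for the column tuple $(a_0,a_1,a_2,a_3)=(s_3,s_2,s_1,s_1)$ with the shifts $L^{0},L^{d},L^{2d},L^{3d}$ built in, and then apply the sign flip.

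Next I would write out $R_u(\tau)$ for each residue $\tau_2$. For $\tau_2=0$ we get $R_{s_3}(\tau_1)+R_{s_2}(\tau_1)+R_{s_1}(\tau_1)+R_{s_1}(\tau_1)$ evaluated at the appropriate shifted arguments (the shift operators $L^{jd}$ translate the argument of each autocorrelation, but autocorrelation is shift-invariant, so these are just $R_{s_i}(\tau_1)$ with $\tau_1$ running over $\mathbb{Z}_p$); for $\tau_2=1,2,3$ we get sums of three cross-correlations and one auto-correlation, with the shift arguments differing by the $L^{jd}$ offsets — here the key point is that the offsets $0, d, 2d, 3d$ together with $4d\equiv 1$ make the relevant shifts land in a controlled pattern, and one must track which cross-correlation term is evaluated at $\tau_1$ versus $\tau_1+1$. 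Using Lemma~\ref{le odd} (Table~\ref{tab odd f}) and the fact that $y=\pm1$ so that $x^2 = p-4$, I would compute, for each of $\tau_2\in\{0,1,2,3\}$ and each cyclotomic class $D_0,D_1,D_2,D_3$ in which the shift arguments fall, the resulting value of $R_u(\tau)$, check it lies in $\{0,\pm4\}$, and confirm that both $+4$ and $-4$ actually occur so the sequence is optimal \emph{with respect to its correlation magnitude} in the sense defined in the introduction. The cancellations should be clean: e.g. for $\tau_2=0$, entries like $(-2y-1)+(2y-1)+(-2y-1)+(-2y-1)$ collapse because the $y$-terms pair up and $y=\pm1$, and for the mixed cases the $x$-dependent cross-correlation entries of the form $x, -x\pm2$ should combine so that the $x$ cancels and only small constants remain.

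The main obstacle will be the bookkeeping in the odd-$\tau_2$ cases: one has to get exactly right \emph{which} shifted argument ($\tau_1$ or $\tau_1+1$, and translated by a multiple of $d$) feeds into \emph{which} ordered cross-correlation $R_{s_i,s_j}$, because the table is not symmetric in $i,j$ (compare the rows $R_{s_1,s_4}$ and $R_{s_4,s_1}$), and a sign or index error there propagates into a wrong final value. A secondary subtlety is handling the boundary shifts $\tau_1=0$ (and the wrap-around $\tau_1+1=p\equiv0$), where some terms pick up the ``$\{0\}$'' column of the table (values $p$, $1$, or $2-p$) rather than a cyclotomic-class value; these few exceptional shifts must be checked separately to confirm they also give values in $\{0,\pm4\}$. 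Once the four residue classes and the handful of boundary shifts are verified, the sign-flip observation from the first paragraph finishes the proof for every binary $b$ with $b(0)=b(2)$, $b(1)=b(3)$.
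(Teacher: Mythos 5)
Your proposal follows essentially the same route as the paper's proof: split $\tau=4\tau_1+\tau_2$, apply the interleaving correlation formula, use $4d\equiv 1\pmod p$ to make all four shift arguments coincide within each case, and read the resulting sums off Table~\ref{tab odd f}. One small correction: for odd $\tau_2$ the four signs are not all $-1$ but all equal to $(-1)^{b(0)+b(1)}$ (the paper carries this common factor explicitly); since it is a uniform $\pm 1$ factor it does not affect membership in $\{0,\pm 4\}$, so your conclusion stands.
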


\begin{proof} For any $\tau$, $1\le\tau<4p$, we can write $\tau=4\tau_1+\tau_2$, where ($0\le\tau_1<p$ and $0<\tau_2<4$) or ($0<\tau_1<p$ and $\tau_2=0$). Consider the auto-correlation of $u$ in four cases according to $\tau_2=0,1,2,3$:
\begin{enumerate}
\item $\tau_2=0$: In this case, one has $0<\tau_1<p$ and
$$L^{\tau}(u)=I(L^{\tau_1}(s_3)+b(0),L^{\tau_1+d}(s_2)+b(1),L^{\tau_1+2d}(s_1)+b(2),L^{\tau_1+3d}(s_1)+b(3)).$$
Then the auto-correlation of $u$ at shift $\tau$ is equal to
\begin{eqnarray*}
R_{u}(\tau)&=&
R_{s_3}(\tau_1)+R_{s_2}(\tau_1)+2R_{s_1}(\tau_1)\\
&=&\left\{\begin{array}{ll}
4p, & \tau_1=0\\
-4, & \tau_1\ne 0.
\end{array}\right.
\end{eqnarray*}
where the last equal sign is due to the auto-correlation of $s_1$, $s_2$ and $s_3$ given by Lemma \ref{le odd}.
\item $\tau_2=1$: In this case, one has $0\le \tau_1<p$ and $$L^{\tau}(u)=I(L^{\tau_1+d}(s_2)+b(1),L^{\tau_1+2d}(s_1)+b(2),L^{\tau_1+3d}(s_1)+b(3),L^{\tau_1+1}(s_3)+b(0)).$$
Then the auto-correlation of $u$ at shift $\tau$ is equal to
\begin{eqnarray*}
R_{u}(\tau)&=&(-1)^{b(0)+b(1)}R_{s_3,s_2}((\tau_1+d)_p)+(-1)^{b(1)+b(2)}R_{s_2,s_1}((\tau_1+d)_p)\\
&&+(-1)^{b(2)+b(3)}R_{s_1}((\tau_1+d)_p)+(-1)^{b(3)+b(0)}R_{s_1,s_3}((\tau_1+1-3d)_p)\\
&=&(-1)^{b(0)+b(1)}R_{s_3,s_2}((\tau_1+d)_p)+(-1)^{b(1)+b(2)}R_{s_2,s_1}((\tau_1+d)_p)\\
&&+(-1)^{b(2)+b(3)}R_{s_1}((\tau_1+d)_p)+(-1)^{b(3)+b(0)}R_{s_1,s_3}((\tau_1+d)_p)\\
&=&\left\{\begin{array}{ll}
4(-1)^{b(0)+b(1)}, & (\tau_1+d)_p=0\\
4y(-1)^{b(0)+b(1)}, & (\tau_1+d)_p\in D_0\cup D_2\\
-4y(-1)^{b(0)+b(1)}, & (\tau_1+d)_p\in D_1\cup D_3
\end{array}\right.
\end{eqnarray*}
where the second equality is due to $(\tau_1+1-3d)_p= (\tau_1+d)_p$, and the last equal sign is due to the correlation of $s_1$, $s_2$ and $s_3$ given by Lemma \ref{le odd}.
\item $\tau_2=2$:
In this case, one has $0\le \tau_1<p$ and $$L^{\tau}(u)=I(L^{\tau_1+2d}(s_1)+b(2),L^{\tau_1+3d}(s_1)+b(3),L^{\tau_1+1}(s_3)+b(0),L^{\tau_1+d+1}(s_2)+b(1)).$$
Then by Lemma \ref{le odd},  the auto-correlation of $u$ at shift $\tau$ is equal to
\begin{eqnarray*}
R_{u}(\tau)&=&(-1)^{b(0)+b(2)}R_{s_3,s_1}((\tau+2d)_p)+(-1)^{b(1)+b(3)}R_{s_2,s_1}((\tau_1+2d)_p)\\
&&+(-1)^{b(2)+b(0)}R_{s_1,s_3}((\tau_1+1-2d)_p)+(-1)^{b(3)+b(1)}R_{s_1,s_2}((\tau_1+1-2d)_p)\\
&=&(-1)^{b(0)+b(2)}R_{s_3,s_1}((\tau+2d)_p)+(-1)^{b(1)+b(3)}R_{s_2,s_1}((\tau_1+2d)_p)\\
&&+(-1)^{b(2)+b(0)}R_{s_1,s_3}((\tau_1+2d)_p)+(-1)^{b(3)+b(1)}R_{s_1,s_2}((\tau_1+2d)_p)\\
&=&\left\{\begin{array}{ll}
4, & (\tau_1+2d)_p=0\\
0, & (\tau_1+2d)_p\ne 0
\end{array}\right.
\end{eqnarray*}
where the second equality is due to $(\tau_1+1-2d)_p= (\tau_1+2d)_p$, and the last equal sign is due to the correlation of $s_1$, $s_2$ and $s_3$ given by Lemma \ref{le odd}.
\item $\tau_2=3$:
In this case, one has $0\le \tau_1<p$ and $$L^{\tau}(u)=I(L^{\tau_1+3d}(s_1)+b(3),L^{\tau_1+1}(s_3)+b(0),L^{\tau_1+d+1}(s_2)+b(1),L^{\tau_1+2d+1}(s_1)+b(2)).$$
Then by Lemma \ref{le odd},  the auto-correlation of $u$ at shift $\tau$ is equal to
\begin{eqnarray*}
R_{u}(\tau)&=&(-1)^{b(0)+b(3)}R_{s_3,s_1}((\tau_1+3d)_p)+(-1)^{b(1)+b(0)}R_{s_2,s_3}((\tau_1+1-d)_p)\\
&&+(-1)^{b(2)+b(1)}R_{s_1,s_2}((\tau_1+1-d)_p)+(-1)^{b(3)+b(2)}R_{s_1}((\tau_1+1-d)_p)\\
&=&(-1)^{b(0)+b(3)}R_{s_3,s_1}((\tau_1+3d)_p)+(-1)^{b(1)+b(0)}R_{s_2,s_3}((\tau_1+3d)_p)\\
&&+(-1)^{b(2)+b(1)}R_{s_1,s_2}((\tau_1+3d)_p)+(-1)^{b(3)+b(2)}R_{s_1}((\tau_1+3d)_p)\\
&=&\left\{\begin{array}{ll}
4(-1)^{b(0)+b(1)}, & (\tau_1+3d)_p=0\\
-4y(-1)^{b(0)+b(1)}, & (\tau_1+3d)_p\in D_0\cup D_2\\
4y(-1)^{b(0)+b(1)}, & (\tau_1+3d)_p\in D_1\cup D_3
\end{array}\right.
\end{eqnarray*}
where the second equality is due to $(\tau_1+1-d)_p= (\tau_1+3d)_p$, and the last one is due to the correlation of $s_1$, $s_2$ and $s_3$ given by Lemma \ref{le odd}.
\end{enumerate}

According to the discussion above, we have $R_u(\tau)\in \{0,\pm 4\}$ for all $1\le\tau<4p$ which means that $u$ has optimal autocorrelation. The proof of this theorem is completed.
\end{proof}

\begin{theorem}
Let $b=(b(0),b(1),b(2),b(3))$ be a binary sequence with $b(0)=b(2)$ and $b(1)=b(3)$, and $(a_0,a_1,a_2,a_3)$ be chosen from
$$
\{
(s_2,s_3,s_1,s_1),(s_4,s_1,s_2,s_2),(s_1,s_4,s_2,s_2),(s_4,s_1,s_3,s_3),
(s_1,s_4,s_3,s_3),(s_3,s_2,s_4,s_4),(s_2,s_3,s_4,s_4)\}.
$$
Then the binary sequence $u$ by Construction \ref{construction-main} is optimal.
\end{theorem}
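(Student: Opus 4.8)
The plan is to reduce this second theorem to the first one by exploiting the symmetry structure of Table~\ref{tab odd f}. Theorem~\ref{th 4p} already handles the tuple $(s_3,s_2,s_1,s_1)$; the seven new tuples listed here should all be ``equivalent'' to it in the sense that the same case-by-case computation goes through verbatim once one tracks how the entries of the correlation table transform. Concretely, I would first observe that the four sequences $s_1,s_2,s_3,s_4$ come in two pairs, $\{s_1,s_4\}$ and $\{s_2,s_3\}$, whose autocorrelations coincide (see the $R_{s_i}$ rows: $R_{s_1}=R_{s_4}$ and $R_{s_2}=R_{s_3}$), and that swapping the two members of either pair, or exchanging one pair with the other, induces a predictable permutation/sign-change on the cross-correlation entries. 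The goal of the first step is therefore to set up a small ``dictionary lemma'': list the relevant substitutions (e.g. $s_1\leftrightarrow s_4$, $s_2\leftrightarrow s_3$, simultaneous exchange of the two pairs) and record, for each, how $R_{s_i,s_j}(\tau)$ changes, including any replacement of the form $y\mapsto -y$ or $x\mapsto -x$ and any permutation of the cyclotomic classes $D_0,D_1,D_2,D_3$.

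Next I would verify that each of the seven tuples is obtained from $(s_3,s_2,s_1,s_1)$ by one of these admissible substitutions. For instance $(s_2,s_3,s_1,s_1)$ comes from swapping $s_2\leftrightarrow s_3$ in the first two coordinates; $(s_4,s_1,s_2,s_2)$ and $(s_1,s_4,s_2,s_2)$ come from the global pair-exchange (possibly composed with an inner swap); and the remaining four arise similarly, using the fourth sequence $s_4$ in place of $s_1$ and/or $s_3$ in place of $s_2$. Since the hypothesis $y=\pm1$ is symmetric under $y\mapsto -y$, and since the optimality conclusion $R_u(\tau)\in\{0,\pm4\}$ is insensitive both to an overall sign and to which cyclotomic class a given shift falls in (the three ``buckets'' $\{0\}$, $D_0\cup D_2$, $D_1\cup D_3$ are themselves permuted among each other at worst, and $\{0,\pm4\}$ is closed under negation), each substitution carries a valid proof of optimality for $(s_3,s_2,s_1,s_1)$ to a valid proof for the substituted tuple. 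I would phrase this as: ``Repeating the four-case computation in the proof of Theorem~\ref{th 4p} with the entries of Table~\ref{tab odd f} replaced according to the dictionary lemma yields, in every case, $R_u(\tau)\in\{0,\pm4\}$,'' and then spell out one representative case (say $\tau_2=1$ for the tuple $(s_2,s_3,s_1,s_1)$) in full to show the mechanism, leaving the other six tuples and three $\tau_2$-values to the reader as ``entirely analogous.''

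The step I expect to be the main obstacle is verifying that the telescoping shift identities used in the proof of Theorem~\ref{th 4p}---namely $(\tau_1+1-3d)_p=(\tau_1+d)_p$, $(\tau_1+1-2d)_p=(\tau_1+2d)_p$, and $(\tau_1+1-d)_p=(\tau_1+3d)_p$, all of which rely on $4d\equiv 1\pmod p$---continue to hold in the same form for the new tuples. Because the interleaving shifts are $0,d,2d,3d$ regardless of which sequences sit in the columns, these identities are structural and should survive unchanged; but one must check that after the substitution the cross-correlation arguments still pair up correctly so that the four summands collapse into a single class-dependent value rather than a genuine sum of four distinct table entries. A secondary, purely bookkeeping, obstacle is making sure the sign prefactors $(-1)^{b(i)+b(j)}$ collapse using $b(0)=b(2)$, $b(1)=b(3)$ exactly as they did before: in the $\tau_2=2$ case all four prefactors become $+1$ (giving autocorrelation $0$ or $4$), while in the $\tau_2=1,3$ cases they all equal $(-1)^{b(0)+b(1)}$, and this pattern must be re-checked for each new column assignment. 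Once the dictionary lemma is in place and one representative case is written out, the rest is routine, so I would keep the written proof short and point to Theorem~\ref{th 4p} for the template.
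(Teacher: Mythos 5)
The paper offers no actual proof of this theorem---it says only that the proof is ``similar to that of Theorem~\ref{th 4p}'' and omits it---so your core plan of rerunning the four-case computation of Theorem~\ref{th 4p} with the relevant rows of Table~\ref{tab odd f} is exactly the intended argument. Your structural observations are also right: the shift identities $(\tau_1+1-kd)_p=(\tau_1+(4-k)d)_p$ depend only on $4d\equiv 1\pmod p$, and the collapse of the sign prefactors to $+1$ (for $\tau_2=2$) or $(-1)^{b(0)+b(1)}$ (for $\tau_2=1,3$) depends only on $b(0)=b(2)$, $b(1)=b(3)$; neither depends on which sequences occupy the columns.

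The one step that would fail as stated is the ``dictionary lemma.'' The only substitutions that act on Table~\ref{tab odd f} by a \emph{single uniform} relabeling of the columns are the simultaneous decimations $s_1\mapsto s_3\mapsto s_4\mapsto s_2\mapsto s_1$ (multiplying all supports by $\alpha$, which shifts every correlation argument from $D_i$ to $D_{i+1}$) and their powers; in particular the pair exchange $s_1\leftrightarrow s_4$, $s_2\leftrightarrow s_3$ corresponds to $D_i\mapsto D_{i+2}$. Swapping only one pair, e.g.\ $s_2\leftrightarrow s_3$ with $s_1$ fixed, is \emph{not} of this form: from the table, $R_{s_2,s_3}(D_i)=R_{s_3,s_2}(D_{i+2})$ but $R_{s_3,s_1}(D_i)=R_{s_2,s_1}(D_{i+1})$ and $R_{s_1,s_2}(D_i)=R_{s_1,s_3}(D_{i-1})$, so the four summands in the $\tau_2=1$ case are shifted by different amounts and the new sum is not a permutation of the old one---it must be recomputed. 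The uniform symmetries split the eight tuples into two orbits of size four, $\{(s_3,s_2,s_1,s_1),(s_4,s_1,s_3,s_3),(s_2,s_3,s_4,s_4),(s_1,s_4,s_2,s_2)\}$ and $\{(s_2,s_3,s_1,s_1),(s_1,s_4,s_3,s_3),(s_3,s_2,s_4,s_4),(s_4,s_1,s_2,s_2)\}$, so a direct four-case verification is genuinely needed for one representative of the second orbit. It does go through: for $(s_2,s_3,s_1,s_1)$ and $\tau_2=1$ the sum $R_{s_2,s_3}+R_{s_3,s_1}+R_{s_1}+R_{s_1,s_2}$ evaluates to $-4y,4y,-4y,4y$ on $D_0,D_1,D_2,D_3$ and to $4$ at $0$---but that is an independent computation, not a consequence of a table symmetry. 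Since you already planned to write out exactly this representative in full, your proof is complete once the dictionary claim is weakened to the decimation symmetries and the second orbit is handled by direct calculation.
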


\begin{proof}
The proof is similar to that of Theorem \ref{th 4p}, and thus is omitted here.
\end{proof}

%\begin{remark}The binary sequences given by Theorems \ref{th 4p} and 2 are different from that given by \cite{TG10}. Since our column sequences are three different Ding-Lam-Helleseth sequences, and the sequence $b$ is not a perfect binary sequence, while in \cite{TG10}, the column sequences are the first type Legendre sequences or the second type Legendre sequences, and the sequence $b$ is a perfect binary sequence.
%\end{remark}

Finally, we conclude this section by giving an example to illustrate our construction.

\begin{example}
Let $p=29$, and $\alpha=2$ be a primitive element of the residue ring $\mathbb{Z}_{p}$. Then \begin{eqnarray*}
D_0&=&\{ 1, 7, 16, 20, 23, 24, 25 \},\\
D_1&=&\{ 2, 3, 11, 14, 17, 19, 21 \},\\
D_2&=&\{ 4, 5, 6, 9, 13, 22, 28 \},\\
D_3&=&\{ 8, 10, 12, 15, 18, 26, 27 \}
\end{eqnarray*}
are four cyclotimic classes of order $4$ with respect to $\mathbb{Z}_p$. In this case, $x=5$, $y=-1$, and $f=7$. Generate three Ding-Helleseth-Lam sequences with supports $D_0\cup D_1$, $D_0\cup D_3$, $D_1\cup D_2$, i.e.,
\begin{eqnarray*}
s_1&=&(0, 1, 1, 1,  0,  0,  0, 1,  0,  0,  0, 1,  0,  0, 1,  0, 1, 1,  0, 1, 1, 1,  0, 1, 1, 1,  0,  0,  0)\\
s_2&=&(0, 1,  0,  0,  0,  0,  0, 1, 1,  0, 1,  0, 1,  0,  0, 1, 1,  0, 1,  0, 1,  0,  0, 1, 1, 1, 1, 1,  0)\\
s_3&=&(0,  0, 1, 1, 1, 1, 1,  0,  0, 1,  0, 1,  0, 1, 1,  0,  0, 1,  0, 1,  0, 1, 1,  0,  0,  0,  0,  0, 1).
\end{eqnarray*}
Let $b=(0,0,0,0)$, $(a_0,a_1,a_2,a_3)=(s_3,s_2,s_1,s_1)$, and  $d=22$. By (\ref{eq u}), we have the interleaved sequence:
\begin{eqnarray*}
u&=&I(s_3,L^d(s_2),L^{2d}(s_1),L^{3d}(s_1))\\
&=&(0, 1, 0, 1, 0, 0, 1, 1, 1, 0, 1, 1, 1, 0, 0, 0, 1, 0, 1, 1, 1, 0, 1, 1, 1, 1, 1, 0, 0, 1, 0,1,\\&& 0, 0, 1, 0, 1, 1, 1, 0, 0, 1, 1, 1, 1, 1, 0, 0, 0, 1, 0, 0, 1, 1, 0, 0, 1, 0, 0, 1, 0, 0, 1, 0, 0, \\&& 1, 1, 0, 1, 0, 1, 0, 0, 1, 0, 1, 1, 0, 0, 1, 0, 1, 0, 1, 1, 1, 1, 1, 1, 0, 0, 0, 0, 0, 0, 0, 0, 1, \\&& 0,0, 0, 0, 1, 1, 0, 1, 0, 1, 0, 0, 0, 1, 1, 1, 1, 0).
\end{eqnarray*}
By computer experiment, the auto-correlation of $u$ is given by
\begin{eqnarray*}
\{R_u(\tau)\}_{\tau=1}^{115}&=&
\{-4, 0, 4, -4, -4, 0, -4, -4, -4, 0, 4, -4, -4, 0, 4, -4, 4, 0, 4, -4, 4, 0, -4, -4, -4, \\&& 0, 4,
-4, -4, 0, 4, -4, -4, 0, -4, -4, 4, 0, 4, -4, 4, 0, 4, -4, -4, 0, 4, -4, -4, 0, -4, \\&&-4, -4, 0, 4,
-4, -4, 4, -4, -4, 4, 0, -4, -4, -4, 0, -4, -4, 4, 0, -4, -4, 4, 0, 4, -4,
\\&& 4, 0, 4, -4, -4, 0, -4,
-4, 4, 0, -4, -4, 4, 0, -4, -4, -4, 0, 4, -4, 4, 0, 4, -4, 4, 0, -4, \\&&-4, 4, 0, -4, -4, -4, 0, -4, -4, 4, 0, -4\}.
\end{eqnarray*}
Hence $R_u(\tau)\in \{0,\pm 4\}$ for all $1\le\tau<116$.
\end{example}

\section{Conclusion}

In this letter, we proposed a construction of binary sequences of period $4p$ with  the interleaved structure
$$u=I(a_0+b(0),L^{d}(a_1)+b(1),L^{2d}(a_2)+b(2),L^{3d}(a_3)+b(3)).$$
where $d$ is some integer with $4d\equiv 1~(\bmod~ p)$ and the column sequences $a_i, 0\leq i\leq 3$ are appropriately selected from the Ding-Helleseth-Lam sequences. Our construction contains one earlier construction of binary optimal sequences as  special cases, and can produce new binary sequences with optimal autocorrelation. It may be possible and interesting to find other column sequences to obtain more optimal binary sequences using this interleaved structure.

\end{document}